\newtheorem{theorem}{Theorem}
  \providecommand\BibTeX{{%
    \normalfont B\kern-0.5em{\scshape i\kern-0.25em b}\kern-0.8em\TeX}}}
\begin{document}

\title{Automated Regression Unit Test Generation for Program Merges}

\author{Tao Ji}
\email{taoji@nudt.edu.cn}
\affiliation{%
  \institution{National University of Defense Technology}
  \city{Changsha}
  \state{Hunan, China}
  \postcode{410073}
}

\author{Liqian Chen}
\email{lqchen@nudt.edu.cn}
\affiliation{%
  \institution{National University of Defense Technology}
  \city{Changsha}
  \state{Hunan, China}
  \postcode{410073}
}

\author{Xiaoguang Mao}
\email{xgmao@nudt.edu.cn}
\affiliation{%
  \institution{National University of Defense Technology}
  \city{Changsha}
  \state{Hunan, China}
  \postcode{410073}
}

\author{Xin Yi}
\email{yixin_09@nudt.edu.cn}
\affiliation{%
  \institution{National University of Defense Technology}
  \city{Changsha}
  \state{Hunan, China}
  \postcode{410073}
}

\author{Jiahong Jiang}
\email{jhjiang@nudt.edu.cn}
\affiliation{%
  \institution{Beijing Institute of Tracking and Telecommunication Technology}
  \city{Beijing}
  \state{China}
}








\begin{abstract}
Merging other branches into the current working branch is common in collaborative software development.
However, developers still heavily rely on the textual merge tools to handle the complicated merge tasks.
The latent semantic merge conflicts may fail to be detected and degrade the software quality.
Regression testing is able to prevent regression faults and has been widely used in real-world software development.
However, the merged software may fail to be well examined by rerunning the existing whole test suite.
Intuitively, if the test suite fails to cover the changes of different branches at the same time, the merge conflicts would fail to be detected.
Recently, it has been proposed to conduct verification on 3-way merges, but this approach does not support even some common cases such as different changes made to different parts of the program.
In this paper, we propose an approach of regression unit test generation specifically for checking program merges according to our proposed test oracles.
And our general test oracles support us to examine not only 3-way merges, but also 2-way and octopus merges.
Considering the conflicts may arise in other locations besides changed methods of the project, we design an algorithm to select UUTs based on the dependency analysis of the whole project.
On this basis, we implement a tool called TOM to generate unit tests for Java program merges.
We also design the benchmark MCon4J consisting of 389 conflict 3-way merges and 389 conflict octopus merges to facilitate further studies on this topic.
The experimental results show that TOM finds 45 conflict 3-way merges and 87 conflicts octopus merges, while the verification based tool fails to work on MCon4J.
\end{abstract}


\begin{CCSXML}
<ccs2012>
<concept>
<concept_id>10011007.10011074.10011099.10011102.10011103</concept_id>
<concept_desc>Software and its engineering~Software testing and debugging</concept_desc>
<concept_significance>500</concept_significance>
</concept>
<concept>
<concept_id>10011007.10011074.10011111.10011113</concept_id>
<concept_desc>Software and its engineering~Software evolution</concept_desc>
<concept_significance>500</concept_significance>
</concept>
</ccs2012>
\end{CCSXML}

\ccsdesc[500]{Software and its engineering~Software testing and debugging}
\ccsdesc[500]{Software and its engineering~Software evolution}

\keywords{software testing, unit test generation, program merges}


\maketitle

\section{Introduction}

Developers utilize the version control systems to make their own changes and accept contributions from other developers.
During the process, conflicts may arise once developers merge other branches into the current working branch.
Conflicts annoy the developers and developers have to take much efforts and be careful to deal with these conflicts~\cite{McKee17}.
After decades of hard working on detecting merge conflicts, various tools have been proposed~\cite{mens02,Brun11,Guimaraes12,Kasi13} to assist developers in detecting conflicts, reviewing changes and resolving conflicts.
However, considering the generality and usability of these tools, developers still rely on textual merge tools (e.g., those integrated in version control systems) to deal with their daily merge work~\cite{McKee17}.
In addition, a recent study conducted by Ahmed et al.~\cite{Ahmed17} shows that merges contain more code smells once conflicts arise.
This reality motivates us to figure out some general and effective method that has the potential to be widely used in real-world development.

Regression testing has been widely used to prevent regression faults and ensure the software quality after changes are made to the software.
Obviously, regression testing also can be used to detect conflicts after merging branches.
Once one test fails on the merge version but passes on all of the parent versions, this test reveals the merge conflicts.
Along this direction, Brun et al.~\cite{Brun11} classify merge conflicts into \textit{textual} and \textit{higher-order} conflicts (i.e., build and test conflicts), and tell whether test conflicts arise according to the results of rerunning the existing test suite.
As is well known, maintaining a high-quality test suite requires much efforts.
Hence, automatically generating unit test cases has been studied extensively. 
And, to evolve the test suite with the software, regression unit test generation is proposed to ensure that the software does not have other unexpected behaviors brought by new changes.

It seems that ideally rerunning the test suite is enough for detecting merge conflicts if we have a pretty high-quality test suite before merging branches.
However, in practice, the test suite may still have a high probability to miss the merge conflicts due to the work flow of the collaborative development.
Imagine that two branches are developed by different developers, the newly added or changed test cases may not well cover the changes made by the other branch since the developers are not aware of others' work.
In other words, if there exists one execution path that includes both changes of two branches, this path would not be covered by the existing test suite.
As a result, the conflicts between these changes fail to be detected by running the existing test suite.
If developers carefully examine the relationships between changes of different branches with the help of change impact analysis tools, the conflicts can be reduced.
However, a recent survey~\cite{Jiang2017} shows that developers do not use any change impact analysis tool in their daily debugging work, although they think these tools are much helpful.

Recently, Sousa et al.~\cite{Sousa18} have made progress on guaranteeing the quality of 3-way merges, by proposing the contract of \textit{semantic conflict freedom} and developing the tool SafeMerge to verify whether one 3-way merge meets this contract.
However, the state-of-the-art verification approach still has a set of limitations and challenges in verifying program merges.
First, besides 3-way merges, Git also supports to merge two branches without the common ancestor (i.e., 2-way merge) or more than 2 branches (i.e., octopus merge).\footnote{\url{https://git-scm.com/docs/git-merge/2.22.0}}
SafeMerge only supports 3-way merge ant thus fails to deal with other common merge scenarios.
Second, SafeMerge only works on those cases that two branches make changes to the same Java method.
Intuitively, changes made to different methods also may bring conflicts if these two methods are invoked along the same execution path.
Third, as described in~\cite{Sousa18}, SafeMerge has a set of limitations on changes made to method signatures, the analysis scope and exceptions.
In addition, considering the challenges for static analysis of Java reflection, verification on Java program merges involving Java dynamic features is not sound~\cite{Landman17}.

In this paper, we propose general test oracles for merges inspired from the contract of semantic conflict freedom and make the oracles applicable for all real-world merge scenarios (i.e., 2-way, 3-way and octopus merges).
After that, based on our proposed test oracles, we address the problems that how to find the UUTs (\textbf{U}nit \textbf{U}nder \textbf{T}esting) from the whole project and which variant involved in the merge scenario should be used to generate test cases.
We implement a tool named TOM (\textbf{T}esting \textbf{o}n \textbf{M}erges) to automatically find the impacted methods due to changes and then generate test cases to reveal conflicts.
Moreover, we construct a benchmark named MCon4J (\textbf{M}erge \textbf{Con}flicts \textbf{for} \textbf{J}ava).
MCon4J contains a total of 389 three-way merges and 389 octopus merges respectively, in each of which merge conflicts exist.
Then, we conduct experiments to examine the effectiveness of TOM.

In summary, our contributions are as follows:
\begin{itemize}
\item We propose to apply the regression unit test generation for detecting merge conflicts as the supplement of existing regression testing that does not consider merges specifically;
\item We propose the notion of general test oracles for checking semantic conflicts in merges, which supports 2-way, 3-way and octopus merges;
\item We design an algorithm to efficiently find the UUTs from the whole project based on dependency analysis such that we can concentrate on generating test cases to detect semantic conflicts over these UUTs;
\item We design the benchmark MCon4j consisting of 389 conflict 3-way merges and 389 conflict octopus merges, to facilitate further studies on detecting semantic conflicts; 
\item Experimental results show that our tool TOM finds 45 conflict 3-way merges and 87 octopus merges.
\end{itemize}

The rest of the paper is structured as follows. First we give our motivation (Section II). Then we detail our proposed approach (Section III) and evaluation (Section IV). After that, we present the discussion (Section V) and related work (Section VI). Finally, we conclude (Section VII).

\section{Motivation}

In this section, we discuss the shortcomings of existing methods that can be used to detect merge conflicts, which motivates and inspires our work.

\subsection{Verification on Three-way Merges} 

First, we recall the notion of semantic conflict freedom, proposed recently by Sousa et al.~\cite{Sousa18} for verifying 3-way merges.

\noindent\textit{Definition 2.1} (\textbf{Semantic Conflict Freedom})
Suppose that we are given four program versions
$O,A,B,M$ representing the base program, its two variants, and the merge respectively.
We say that $M$ is semantically conflict-free, if for all valuations $\sigma$ such that:

\centerline{$\sigma \vdash O \Downarrow \sigma_O,~\sigma \vdash A \Downarrow \sigma_A,~\sigma \vdash B \Downarrow \sigma_B,~\sigma \vdash M \Downarrow \sigma_M$}

\noindent the following conditions hold for all $i$ (where $i \in [0,len(out)], out$ represents the outputs): 

\noindent(1) If $\sigma_O$[($out$,$i$)]$\mathrm\neq\sigma_A$[($out$,$i$)], then $\sigma_M$[($out$,$i$)]=$\sigma_A$[($out$,$i$)] 

\noindent(2) If $\sigma_O$[($out$,$i$)]$\mathrm\neq\sigma_B$[($out$,$i$)], then $\sigma_M$[($out$,$i$)]=$\sigma_B$[($out$,$i$)] 

\noindent(3) Otherwise, $\sigma_O$[($out$,$i$)] = $\sigma_A$[($out$,$i$)] = $\sigma_B$[($out$,$i$)] = $\sigma_M$[($out$,$i$)]

Specifically, if one variable's value returned by variant \textit{A} (resp. \textit{B}) differs from the same variable's value returned by the base \textit{O}, then this variable's return value of the merge \textit{M} should agree with \textit{A} (resp. \textit{B}).

\begin{figure}
\centering
\includegraphics[width=0.7in]{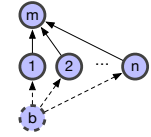}
\caption{The $n-parent$ merge has $n$ parents merged and these parents may have the common ancestor.}
\label{git_merge}
\end{figure}

\textbf{Real-world Merge Scenarios.} In real-world software development, Git is the most popular distributed version control system.
Different from the 3-way merge, 2-way merge has two branches merged without the common ancestor.
Git supports merging the other unrelated branch\footnote{\url{https://git-scm.com/docs/git-merge\#Documentation/git-merge.txt---allow-unrelated-histories}} into the working branch, which is one typical case of the 2-way merge.
Imagine that two branches both add methods with the same name compared to their ancestor.
Without any common ancestors, we also consider this merge as 2-way merge.
Moreover, Git supports merging multi-branches into the working branch, which is called the octopus merge.
However, we cannot use the notion of semantic conflict freedom in Definition 2.1 for verifying 2-way and octopus merges.
The meaning of \textit{\textbf{n-way}} merge has some ambiguity, for example, Le{\ss}enich et al~\cite{Lessenich2017} denote the octopus merge as n-way merge.
Hence, in our paper, we use the \textit{\textbf{n-parent}} merge to describe the real-world merge scenarios, as shown in Fig.~\ref{git_merge}.

\textbf{False Positives.}
In real-world merge scenarios, we cannot get the merge version successfully when two variants conflicts.
To have the final merge, developers may make a concession or try to figure out other ways.
However, the semantic conflict freedom is defined based on the idea that the merge should reserve all the changes of different branches.
Since other determination on the merge's behavior has been introduced when developers resolve conflicts, the notion of semantic conflict freedom does not fit.


According to the definition of semantic conflict freedom, once two variants change the same variable's value to different values, we can say the semantic conflict arises without analyzing the merge result.
Consider one example calculating one person's total income. 
The base version gets the income by ``income = salary'', the first variant changes it to ``income = salary + stock'' and the second changes it to ``income = salary + rent''.
Actually, two assignments do not conflict with each other considered developers' intention, and we shall use ``income = salary + stock + rent'' as the merge result.
In this case, this merge is not semantic conflict free while it may be the best merge candidate.
Moreover, the verification tool SafeMerge\footnote{Its latest commit is b2bac46ada till the date that we submit this paper, and we conduct experiments on this version.} reports conflicts without providing any counterexamples, which will increase developers' burden on checking the merge result.
Hence, we consider proposing the test generation approach to generate tests that cover the conflict parts.
Then, it will be much easier for developers to examine the merges.
\begin{figure}
\centering
\includegraphics[width=1.6in]{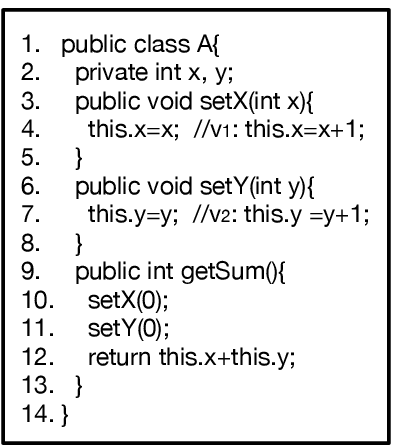}
\caption{Semantic conflicts arise in this merge.}
\label{example2}
\end{figure}

\textbf{False Negatives.} SafeMerge works at the method level and only works when both branches make changes to the same method.
Hence, we wonder whether semantic conflicts arise when two branches modify the different methods.
As shown in the Fig.~\ref{example2}, this Java class has two setter methods.
The first variant changes the body of \textit{setX} to ``this.x=x+1;'', while the second variant changes the body of \textit{setY} to ``this.y=y+1;''.
Obviously, we can merge them successfully by those textual merge tools, while SafeMerge will not report the semantic conflict for the method \textit{getSum}.
However, given the same input for four versions of \textit{getSum}, the returned results actually violate the semantic conflict freedom defined in Definition 2.1.
Imagine some more complicated cases that one method invokes changed methods indirectly.
If we want to improve SafeMerge to reduce the false negatives of above cases, we should analyze the invoked changed methods, which would increase the complexity of verification and may bring false positives due to the limitation of current verification techniques.

\subsection{Regression Testing}
Regression testing is used to make sure that the changes made to software are intended and do not introduce any unexpected behavior.
In the ideal case, developers of different branches add new tests to represent their intention on changes in both two branches, and after merging these two branches, developers can simply rerun the whole test suite to make sure the software works well without any failure.
However, in practice, since developers may be not aware of the other changes introduced by different branches in advance, the added test cases may not cover those changes.
As a result, rerunning the test suite may not expose the conflict introduced by the merged branch.

Existing works on automated regression unit test generation focus on the different parts between two versions.
And, to the best of our knowledge, there is no prior work aiming to automatically generate regression unit tests for the merge scenarios.
Obviously, it is relatively easy for developers to investigate the program behavior by examining test cases.
We wonder whether we can make use of the recent advance of verification on 3-way merges and regression testing to generate test cases to reveal conflicts for n-parent merges.

\section{Proposed Approach}

Similarly, as for the general problem of test generation, we need to figure out what to test and where to test.
In this section, we first propose our notion of test oracles to reveal merge conflicts.
Then, we present our algorithm for selecting the UUTs that may contain merge conflicts from the the whole program.
Finally, we detail the implementation of our tool TOM.

\subsection{Test Oracles}
\subsubsection{Test Oracles for 3-way Merges} In our testing based approach, we need to find the inputs that make the contract of semantic conflict freedom fail.
Before introducing our proposed test oracles, we first recall the algorithm~\cite{Sousa18} used in SafeMerge for verifying the semantic conflict freedom on 3-way merges.
As shown in Algorithm~\ref{alg:safemerge}, after computing the post relation on output variables of four versions in a 3-way merge (Line 2), the algorithm validates whether the logic formula is satisfiable (Line 6).

\begin{figure}[t]
\centering
\includegraphics[width=2in]{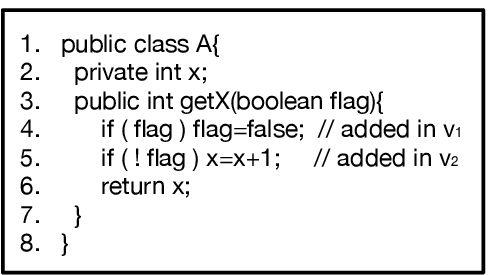}
\caption{Semantic conflicts that fail to be detected by SafeMerge.}
\label{error_example}
\end{figure}

\begin{algorithm}
\caption{Algorithm used in SafeMerge~\cite{Sousa18} for verifying semantic conflict freedom on 3-way merges}
\label{alg:safemerge}
\begin{algorithmic}[1]
\Function{verify}{$O$,$A$,$B$,$M$}
  \State $\varphi := RelationalPost(O,A,B,M)$
  \State $\chi_1 := \forall i. (out_O[i] \neq out_A[i] \Rightarrow out_A[i] = out_M[i]$)
  \State $\chi_2 := \forall i. (out_O[i] \neq out_B[i] \Rightarrow out_B[i] = out_M[i]$)
  \State $\chi_3 := \forall i. (out_O[i] = out_A[i] = out_B[i] = out_M[i]$)
  \State \Return $Valid(\varphi \Rightarrow (\chi_1\wedge\chi_2)\vee\chi_3)$
\EndFunction
\end{algorithmic}
\end{algorithm}

By examining this algorithm, we find that it is inconsistent with the definition of semantic conflict freedom (Definition 2.1).
For example, as shown in the Fig.~\ref{error_example}, the method \textit{getX} returns \textit{x} based on the input \textit{flag}.
If the \textit{flag} is set to \textit{true}, the merge version returns the different value compared to all of its three ancestors (i.e., the original version and two variants).
There is no doubt that this merge is not semantic conflict free according to the definition.
However, this case would not be reported if we adopt Algorithm~\ref{alg:safemerge} to verify the property.\footnote{We have examined the implementation of SafeMerge, and the same problem exists.}
We revise the algorithm into the following one:
\begin{algorithm}
\caption{Revised algorithm for verifying conflict freedom}
\label{alg:revised}
\begin{algorithmic}[1]
\Function{verify}{$O$,$A$,$B$,$M$}
  \State $\varphi := RelationalPost(O, A, B, M)$
  \State $\chi_1 := \forall i. (out_O[i] \neq out_A[i] \Rightarrow out_A[i] = out_M[i]$)
  \State $\chi_2 := \forall i. (out_O[i] \neq out_B[i] \Rightarrow out_B[i] = out_M[i]$)
  \State $\chi_3 := \forall i. (out_O[i]\mathrm{=} out_A[i] \mathrm{=} out_B[i] \Rightarrow out_O[i] = out_M[i]$)
  \State \Return $Valid(\varphi \Rightarrow (\chi_1\wedge\chi_2\wedge\chi_3))$
\EndFunction
\end{algorithmic}
\end{algorithm}

As shown in Algorithm~\ref{alg:revised}, if there exists one input that makes the logic formula \texttt{false} (Line 6), we would say the conflict exists.
To find the inputs revealing conflicts, we negate the $\chi_1\wedge\chi_2\wedge\chi_3$, then we have $\neg\chi_1\vee\neg\chi_2\vee\neg\chi_3$, where 

\centerline{$\neg\chi_1 \triangleq \exists i. (out_O[i] \neq out_A[i] \wedge out_A[i] \neq out_M[i])$}

\centerline{$\neg\chi_2 \triangleq \exists i. (out_O[i] \neq out_B[i] \wedge out_B[i] \neq out_M[i])$}

\centerline{$\neg\chi_3 \triangleq \exists i. (out_O[i] = out_A[i] = out_B[i] \wedge out_O[i] \neq out_M[i])$.}

Hence, to reveal merge conflicts, we should find the inputs $I$ that make the formula 

\centerline{$E(O,I) \wedge E(A,I) \wedge E(B,I) \wedge E(M,I) \Rightarrow \neg\chi_1\vee\neg\chi_2\vee\neg\chi_3$}

\noindent satisfiable, where $E(V,I)$ represents the execution result on the program $V$ by inputs $I$.
In other words, after execution on four different versions with the same inputs $I$, if $\neg\chi_1\vee\neg\chi_2\vee\neg\chi_3$ is satisfiable, we say the inputs $I$ reveal the merge conflict.
To simplify the problem of finding the inputs revealing conflicts, we give the following theorem to reformulate $\neg\chi_1\vee\neg\chi_2\vee\neg\chi_3$.
\begin{theorem}
$\neg\chi_1 \vee \neg\chi_2 \vee \neg\chi_3 \Leftrightarrow \neg\chi_1 \vee \neg\chi_2 \vee\neg\chi_3'$, where\\$\neg\chi_3' \triangleq \exists i. (out_A[i] \neq out_M[i] \wedge out_M[i] \neq out_B[i])$.
\end{theorem}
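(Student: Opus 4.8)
The plan is to prove the equivalence by showing the two directions $\Rightarrow$ and $\Leftarrow$ of the biconditional, treating the three disjuncts $\neg\chi_1$, $\neg\chi_2$ separately from the swap of $\neg\chi_3$ for $\neg\chi_3'$. Since $\neg\chi_1 \vee \neg\chi_2$ appears identically on both sides, the whole statement reduces to: assuming $\neg\chi_1$ and $\neg\chi_2$ are both false (i.e. $\chi_1 \wedge \chi_2$ holds), show $\neg\chi_3 \Leftrightarrow \neg\chi_3'$. Equivalently, under the hypothesis that for every index $i$ we have $(out_O[i] \neq out_A[i] \Rightarrow out_A[i] = out_M[i])$ and $(out_O[i] \neq out_B[i] \Rightarrow out_B[i] = out_M[i])$, I want to establish that there exists an index witnessing $out_O[i] = out_A[i] = out_B[i] \wedge out_O[i] \neq out_M[i]$ if and only if there exists an index witnessing $out_A[i] \neq out_M[i] \wedge out_M[i] \neq out_B[i]$.

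First I would do the direction $\neg\chi_3 \Rightarrow \neg\chi_3'$: given an index $i$ with $out_O[i] = out_A[i] = out_B[i]$ and $out_O[i] \neq out_M[i]$, substituting the common value shows $out_A[i] = out_O[i] \neq out_M[i]$ and $out_M[i] \neq out_O[i] = out_B[i]$, so the very same index $i$ witnesses $\neg\chi_3'$ — this direction needs no side hypothesis at all. For the converse, $\neg\chi_3' \Rightarrow \neg\chi_3$ under $\chi_1 \wedge \chi_2$: take an index $i$ with $out_A[i] \neq out_M[i]$ and $out_M[i] \neq out_B[i]$. From $\chi_1$ applied contrapositively, $out_A[i] \neq out_M[i]$ forces $out_O[i] = out_A[i]$; from $\chi_2$ applied contrapositively, $out_M[i] \neq out_B[i]$ forces $out_O[i] = out_B[i]$. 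Hence $out_O[i] = out_A[i] = out_B[i]$, and since $out_A[i] \neq out_M[i]$ we get $out_O[i] \neq out_M[i]$, so index $i$ witnesses $\neg\chi_3$. Chaining the two implications (the forward one unconditionally, the backward one under $\chi_1 \wedge \chi_2$) gives the biconditional within the scope where $\neg\chi_1 \vee \neg\chi_2$ fails, and the biconditional holds trivially when $\neg\chi_1 \vee \neg\chi_2$ holds; combining the cases yields the stated equivalence.

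The only subtlety, and the thing I would be careful to state cleanly rather than a genuine obstacle, is the case split on whether $\neg\chi_1 \vee \neg\chi_2$ already holds: in that case both sides of the outer biconditional are true regardless of $\neg\chi_3$ versus $\neg\chi_3'$, so nothing needs to be proved; the real content lives entirely in the complementary case, and there the contrapositive use of the universally-quantified $\chi_1$ and $\chi_2$ at the specific witnessing index is what makes $\neg\chi_3'$ collapse back onto $\neg\chi_3$. I expect the proof to be short — essentially a propositional manipulation plus two one-line index arguments — so the main work is just organizing the case analysis so the reader sees why dropping the "$out_O[i] = out_A[i] = out_B[i]$" conjunct from $\neg\chi_3$ and replacing it with the weaker-looking $\neg\chi_3'$ loses nothing in the presence of $\chi_1 \wedge \chi_2$.
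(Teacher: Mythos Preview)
Your proof is correct, but it is organized quite differently from the paper's. The paper does not case-split on $\neg\chi_1\vee\neg\chi_2$; instead it first merges the three existentials into a single $\exists i.(\cdots)$ using $\exists i.P(i)\vee\exists i.Q(i)\Leftrightarrow\exists i.(P(i)\vee Q(i))$, and then carries out a purely propositional chain of equivalences on the body of that quantifier: twice it inserts a redundant disjunct via absorption $P\Leftrightarrow P\vee(P\wedge Q)$ and then collapses two disjuncts via $(P\wedge Q)\vee(P\wedge\neg Q)\Leftrightarrow P$, ultimately arriving at $(M\neq A\wedge A\neq O)\vee(M\neq B\wedge B\neq O)\vee(M\neq A\wedge M\neq B)$ before splitting the existential back out. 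Your route is more semantic: by assuming $\chi_1\wedge\chi_2$ you can use their contrapositives at the witnessing index, which immediately explains \emph{why} the conjunct $out_O[i]=out_A[i]=out_B[i]$ in $\neg\chi_3$ becomes redundant --- it is forced by $out_A[i]\neq out_M[i]$ and $out_B[i]\neq out_M[i]$. The paper's argument has the virtue of being a single uninterrupted equivalence chain that a Boolean-algebra checker could verify line by line, while yours is shorter, isolates the one nontrivial implication ($\neg\chi_3'\Rightarrow\neg\chi_3$ under $\chi_1\wedge\chi_2$), and makes the role of $\chi_1,\chi_2$ in enabling the reformulation transparent.
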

\begin{proof}
Let $O\triangleq out_O[i], A\triangleq out_A[i], B\triangleq out_B[i], M\triangleq out_M[i].$

\noindent$\exists i.(M\neq A \wedge A\neq O) \vee$ 
$\exists i.(M\neq B \wedge B\neq O) \vee $
$\exists i.(O = A = B  \wedge M\neq O \wedge M\neq A \wedge M\neq B)$

\noindent$\Leftrightarrow$ 
$\exists i.((M\neq A \wedge A\neq O) \vee$
$(M\neq B \wedge B\neq O) \vee$
$(O = A \wedge O = B \wedge M\neq A \wedge M \neq B))$\\
{//} $\exists i. P(i) \vee \exists i. Q(i) \Leftrightarrow \exists i.(P(i)\vee Q(i))$

\noindent $\Leftrightarrow$
$\exists i.((M\mathrm{\neq} A \wedge A\mathrm{\neq} O) \mathrm{\vee}$
$(M\mathrm{\neq} A \wedge A\neq O \wedge M\neq B \wedge B=O) \vee$
$(M\neq B \wedge B\neq O) \vee$
$(O = A \wedge O = B \wedge M\neq A \wedge M \neq B))$\\
{//} $P \Leftrightarrow P\vee (P \wedge Q)$

\noindent $\Leftrightarrow$
$\exists i.((M\mathrm{\neq} A \mathrm{\wedge} A\mathrm{\neq} O)\mathrm{\vee}$
$(M\mathrm{\neq} B \mathrm{\wedge} B\mathrm{\neq} O) \mathrm{\vee}$
$(O\mathrm{=}B \mathrm{\wedge} M\neq A \wedge M \neq B))$\\
{//} $(P\wedge Q)\vee (P\wedge \neg Q) \Leftrightarrow P$

\noindent $\Leftrightarrow$
$\exists i.((M\neq A \wedge A\neq O) \vee$
$(M\neq B \wedge B\neq O) \vee$\\
$(M\neq B \wedge B\neq O \wedge M\neq A) \vee$
$(O = B \wedge M\neq A \wedge M \neq B))$\\
{//} $P \Leftrightarrow P\vee (P \wedge Q)$

\noindent $\Leftrightarrow$
$\exists i.((M\neq A \wedge A\neq O) \vee$
$(M\neq B \wedge B\neq O) \vee$
$(M\neq A \wedge M\neq B))$\\
{//} $(P\wedge Q)\vee (P\wedge \neg Q) \Leftrightarrow P$

\noindent $\Leftrightarrow$
$\exists i.(M\neq A \wedge A\neq O) \vee$
$\exists i.(M\neq B \wedge B\neq O) \vee$
$\exists i.(M\neq A \wedge M\neq B)$
{//} $\exists i. P(i) \vee \exists i. Q(i) \Leftrightarrow \exists i.(P(i)\vee Q(i))$

\noindent Hence, we have $\neg\chi_1 \vee \neg\chi_2 \vee \neg\chi_3 \Leftrightarrow \neg\chi \vee \neg\chi_2 \vee\neg\chi_3'$, based on the above equation.
\end{proof}

The transformation guarantees that these three cases cover all cases that violate the contract of semantic conflict freedom.
Specifically, if (1) one variable's value returned by the merge is different from the values returned by two variants (i.e., $\neg\chi_3'$), or (2) one variable's value returned by one variant is different from that of the original and that of the merge (i.e., $\neg\chi_1$ or $\neg\chi_2$), the merge is not semantic conflict free.

\subsubsection{Generalized Test Oracles for n-parent Merges}
By exploring further $\neg\chi_1$, $\neg\chi_2$ and $\neg\chi_3'$, we can find the similarities between them.
In general, for any $\neg \chi_i \in [\neg\chi_1,\neg\chi_2,\neg\chi_3']$, given one version $v_t$, if we find that its output is different from those of two other variants $v_1$ and $v_2$, we say $\neg\chi_i$ is \texttt{true}.
To find the conflicts in one 3-way merge, we just need to repeatedly find one output of one version is different from those of the other versions.
And according to Theorem 1, the repeatable processes of finding outputs meet $\neg\chi_1, \neg\chi_2$ and $\neg\chi_3'$ respectively on different versions guarantee the quality of the final program merge.

Hence, based on the above observation, the framework of Algorithm~\ref{alg:revised} (i.e., the repeatable process) works for general n-parent merges.
In a 2-way merge scenario consisting of two variants $v_1$ and $v_2$ that do not have the common original, we can not tell what behavior is newly introduced by any of them.
However, we are able to investigate whether the merge has some new behavior that is not introduced by any of its two parents, just by examining the $\neg \chi_3'$.
As for the multiple variants merged in the octopus merge scenario, we just need to investigate each of the variants repeatedly, like for those two variants of 3-way merge by examining the $\neg \chi_1$ and $\neg \chi_2$.
Similarly, for the merge version in the octopus merge scenario, as shown in Fig.~\ref{n_parents}, we just need to compare those outputs returned by variants \{$v_1, v_2, ..., v_n$\} with that of the merge version $v_m$.
Now, we are able to tell whether semantic conflicts arise for all real-world merge scenarios.

\begin{figure}
\centering
\includegraphics[width=1in]{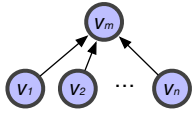}
\caption{$v_m$ is the merge version (i.e., the target program for generating tests) and [$v_1$, $v_2$, ..., $v_n$] are the variants (that need to fail the tests).}
\label{n_parents}
\end{figure}

Test case is a piece of code fragment including inputs and invoking UUTs.
Generally, assertions are often used to guard the values of those variables declared in the test case. 
If we have one test case generated for the version $v_m$ with the assertions on the values of output variables, then we can run this test case on all variants to check the failure of assertions.
If all variants fail on the same assertion, we tell that the behavior described by this assertion is missing in all variants.
Overall, the problem of finding inputs revealing conflicts can be transformed into the test generation problem.

\textit{Definition 3.1} (\textbf{Test Oracle on Program Merges})
Given an n-parent merge, to find the merge conflicts, we generate tests that achieve the following goals: 

\noindent(1) \textit{\textbf{Unexpected Behavior}}: Suppose that one test case $t$ is generated for the merged program $P_m$. We say that $P_m$ has some unexpected behavior, if for any parent version $P_i$, the same assertion $\varphi$ is violated;

\noindent(2) \textit{\textbf{Lost Behavior}}: Suppose that parent versions have the common original $P_o$. For one parent version $P_i$, we say its newly introduced behavior is missing after merging, if one test case $t$ for $P_i$ fails on $P_o$ and $P_m$ over the same assertion $\varphi$.

As we can see, our proposed test oracles make it in an easier way to detect conflicts, since we just need to repeatedly find one test case generated for one program version that can kill all of the variants by the same assertion.

\subsection{UUTs Selection}
As changes of different branches may be made to different locations across the whole program, we need to identity those methods whose behaviors have been affected by all branches.
For the general case shown in Fig.~\ref{n_parents}, we extract all sets of entities (i.e., fields and methods) that have different behavior between versions \{$\Delta(v_1,v_m)$, $\Delta(v_2,v_m)$, ..., $\Delta(v_n,v_m)$\}.
Once changes are made to one method, two versions of this method may have different behaviors.
And as shown in Fig.~\ref{example2}, the effects of changes also can be propagated into other unchanged locations.
Hence, if this changed method is called in the body of another method, the caller method may behave differently.

\begin{figure*}[t]
\centering
\includegraphics[width=6.6in]{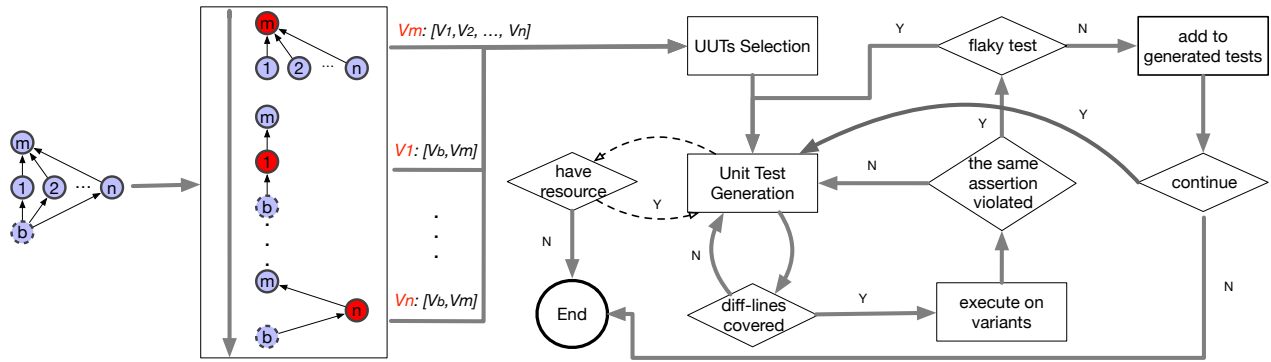}
\caption{The workflow of TOM.}
\label{workflow}
\end{figure*}

In our paper, we extract those added and changed entities (i.e., fields and methods) and we do not consider the deleted class, field and method.
If one deleted method is not called by any methods, it seems meaningless to analyze this method.
If this deleted method was called by other methods and the developer has not deleted all of the invocations in other methods, the compiler would report errors.
As a result, to analyze the impact of deleted methods, we just need to analyze the modified methods.
The added entities should be included since these entities may be modified in other version pairs.
Imagine that in one 3-way merge, one method appears in the second parent versions $v_2$ and the merge version $v_m$, and these versions of this method are different.
This method in the merge should be analyzed since its behavior may be changed by developers after merging successfully.

Given one version $v_t$ together with a set of variants $\mathbb{V}$, after identifying added and changed entities from version pairs \{$\Delta(v_1,v_t)$, $\Delta(v_2,v_t)$, ..., $\Delta(v_n,v_t)$\}, we propose an algorithm for selecting UUTs from $v_m$ that behave differently in all variants, as shown in Algorithm~\ref{alg:uut}.
In some cases, the number of candidate methods may be large.
For example, if changes on the assignments of the fields are made to the constructor, the other methods which use the changed fields should be analyzed since we have to instantiate these methods before calling them.
Hence, one parameter $n$ is used to limit the size of all UUTs and one other parameter $d$ is used to limit the dependency depth explored.
The method-level dependency analysis is conducted on the target version to extract the dependency relationships between entities of the whole program (Line 2).
We then extract added and changed fields as well as methods (Lines 3-8) by comparing each variant with the target version.
During the exploration, we get more directly impacted entities by the already collected entities (Lines 11-19).
Then, we compute the intersection between the sets of impacted entities from each version pair as the UUTs (Line 20).
If we have a set of UUTs more than the number specified, we return the first $n$ UUTs to generate tests (Line 21).
If not, we find more impacted entities by exploring more deeply (Line 10).
If we fail to find the common impacted entities during the search (Lines 10-24), we generate test cases for those added and changed methods to ensure the program quality (Lines 25-26).

\begin{algorithm}
\caption{UUTs Selection}
\label{alg:uut}
\begin{algorithmic}[1]

\Function{select\_UUTs}{$v_t$, $\mathbb{V}$, $d$, $n$}

  \State $entity\_relations = extract\_dependencies(v_t)$
  \State $ce = \{\} $ {//} $ce$ is a list of sets of changed entities
  \State $ie = \{\} $ {//} $ie$ is a list of sets of impacted entities
  \For{ $v_i \in \mathbb{V}$}
    \State $ce[i] = diff(v_i, v_t) $
    \State $ie[i] = ie[i]\cup ce[i]$
  \EndFor
  \State $uuts=\emptyset$
  \For{$i \in [1,d]$}
    \For{$j\in [0, ie.size())$}
      \State $de = \emptyset$
      \For{$entity \in ie[index]$}
        \If{$entity.depth==i-1$}
        \State $de = de \cup get\_impacted(entity)$
        \EndIf
      \EndFor
      \State $ie[j] = ie[j] \cup de$
    \EndFor
    \State $uuts = intersect\_sets(ie)$
    \If{$uuts.size()>n$}
      \State \Return $uuts[0:n-1]$
    \EndIf
  \EndFor
  \If{$uuts == \emptyset$}
    \State $uuts=get\_all\_methods(ce)$
  \EndIf
  
\EndFunction

\end{algorithmic}
\end{algorithm}

\subsection{Implementation}

As shown in Fig.~\ref{workflow}, we present the work flow of our tool TOM.
Given one n-parent merge, we generate tests for \{$v_m$,$v_1$,...,$v_n$\} in order, as shown in the left part of Fig.~\ref{workflow}.
Then, for each case, we select the UUTs and then generate tests to reveal the conflicts.
We employ the dependency analysis tool \textit{depends}\footnote{\url{https://github.com/multilang-depends/depends}} to parse the whole program and generate the dependencies between different entities.

We employ the advanced test generation tool EvoSuite~\cite{Fraser11} to implement the the test generation for program merges.
EvoSuite utilizes meta-heuristic search techniques to generate and optimize test suites with respect to different code coverage criteria without reporting any false alarms~\cite{Gross12}.
EvoSuite generates test cases by different advanced techniques such as alternating variable method, random testing and dynamic symbolic execution.

During the search process of generating tests, coverage criteria are used to generate the high-quality test suit.
As for the UUTs in our case, it may have the same code in all variants and the target version, while the different parts are located in other methods, classes or packages.
Existing implemented coverage criteria in EvoSuite instruments the code of the UUT, and then generate coverage goals for the UUT.
High coverage on the UUT still cannot guarantee that the different parts are covered.
Hence, we implement the diff-line coverage criterion to guide the search to achieve coverage goals on different lines between two versions.

Considering that the execution is expensive, based on the extracted different lines, we determine whether to execute the generated test case on all different variants to detect conflicts.
Given two program versions with the test case $t$, if their execution results are different, we can tell that the different parts between those two versions must be covered.
Hence, as shown in Fig.~\ref{workflow}, if one generated test can not cover any different part between two versions, we do not need to re-execute the test on the other version.
If the generated test can cover the different parts, we then execute it on all of the variants.

For each execution of the test case on different variants, EvoSuite is able to collect the values of those variables in the test case.
Given any two executions, we are able to generate assertions on the variables that have different values to capture the behavior difference between two versions.
After running all the variants, we extract all the assertions that appear in each execution comparison.
These assertions are what we need to reveal the conflicts.
If one test case triggers some exception for the target version, we leave it to developers since the exception may be not desired.
For each statement that has exceptions thrown in the execution on the variant, we generate all assertions for this statement based on the execution on the target version to describe the different values and states.

If we have the same assertion generated by executions on all variants, we then check the stability of this test by rerunning the tests five times.
If we find one stable test, we add it to the test list that will be provided to developers to examine the merge conflicts.
As shown in Fig.~\ref{workflow}, we can configure TOM to stop the unit test generation once one test case revealing conflicts is generated or the given resource has been consumed (e.g., time out).

\section{Experimental Evaluation}
In this section, we describe the details of constructing the benchmark MCon4J, and conduct experiments to evaluate the effectiveness of TOM. 

\subsection{Benchmark}

Sousa et al.~\cite{Sousa18} collect a total of 52 merges from nine real-world open-source Java applications.
In our evaluation, we do not reuse these merges to conduct the experiments, because (1) changes of different branches are made to the same method in these 3-way merges; (2) the false positives and the false negatives may exist in their results, we are not able to ensure the conflict exists without knowing the counterexamples revealing the conflict; and (3) our tool's effectiveness on detecting semantic conflicts heavily depends on the unit test generation tool.
Although unit test generation techniques have been extensively improved, the advanced tools still fail to generate the tests that achieve high coverage for real-world programs consisting of complex objects, structures and logics.
Honestly, the static verification techniques can outperform the dynamic unit test generation techniques in some aspects such as the time costs.
Given the above considerations, we do not conduct experiments on these 52 merges collected by Sousa et al~\cite{Sousa18}.

To evaluate our tool's effectiveness on detecting semantic conflicts, we need one benchmark consisting of 3-way merges and octopus merges.
And for each merge, we need to know whether the conflicts exist or not.
Otherwise we cannot tell that the tool works well if the testing based tool fails to find the conflict.
And considering that changes made to different parts also may bring conflicts, we need to evaluate our tool on those merges whose parents modify the different methods.
However, unlike the merges with textual conflicts, the merges that have semantic conflicts are pretty difficult to be identified by examining the evolutionary history of real-world programs.
Hence, we decide to construct conflict 3-way and octopus merges by leveraging real-world bug-fix activities.

Brun et al.~\cite{Brun11} classifies merge conflicts into \textit{textual} and \textit{higher-order} conflicts.
The higher-order conflicts arise when changes are semantically incompatible and cause compilation errors, test failures, etc.
Based on this notion of merge conflict, we construct conflict merges by making bug-fix tests fail.
If we have one branch created for fixing bugs, but the test case fails after merging other branches, we consider conflict exists.
The remaining problem is how to create other branches.
Just et al.~\cite{Just14:mutants} conduct one investigation of the coupling effect between real faults and the mutants that are generated by commonly used mutation operators. 
The results show the existence of a coupling effect for 73\% of real faults.
Hence, we decide to use the generated mutants as the other branches along with the bug-fix branch to construct 3-way and octopus merges to conduct experiments.

\textbf{Construction of 3-way merges.}
We first execute the bug-fix test case on the buggy program to collect the covered lines that may include those of unchanged classes during the bug fixes.
Then, we make use of the mutation tool major~\cite{Just2014:Major} to generate mutants on those covered lines of the buggy program.
For each buggy program $P_b$, we have the bug-fixed version $P_f$ and the mutant version $P_t$.
Using the default recursive merge strategy of Git, we construct merges that may have semantic conflicts by merging the $P_f$ and $P_t$.
For each buggy program, we have a number of merges.
Then, we execute the bug-fix test case on the merged program $P_m$, we get one conflict merge if this test case fails.

\textbf{Construction of octopus merges.}
For each constructed 3-way merge, we create another branch by randomly choose one mutant of the same class mutated in the $P_t$.
Then, we merge these three branches by using the default merge strategy of Git.
Similarly, the bug-fix test case will fail on the merge program.

Just et al.~\cite{Just14} propose Defects4J which collects a total of 438 reproducible bugs\footnote{https://github.com/rjust/defects4j, the latest commit is 17a99e1} to facilitate other software engineering research.
For example, Defects4J is widely used in different fields such as unit test generation~\cite{Shamshiri15}, automated program repair~\cite{Martinez2017}.
For some versions of the Mockito project, Defects4J fails to generate mutants\footnote{https://github.com/rjust/defects4j/issues/198. The issue remains open and has not been resolved till the date that we submit this paper}.
And for some bugs collected in Defects4J, we fail to construct conflict 3-way merges by merging mutants to make the bug-fix test cases fail.
Finally, we collect a total of 389 conflict three-way merges and 389 conflict octopus merges.

As shown in Table \ref{benchmark}, we list the numbers of merges in each project.
The column ``DC'' means that the mutate branch and the fix branch modify the different classes. 
The column ``SC'' means that two branches modify the different methods of the same class.
The column ``SM'' means that two branches modify the same method.

\begin{table}[H]
\centering
\caption{The numbers of 3-way merges and octopus merges. }
\label{benchmark}
\begin{tabular}{|c|r|r|r|r|r|}
\hline
\multirow{2}{*}{Project} & \multicolumn{1}{c|}{\multirow{2}{*}{Bugs}} & \multicolumn{4}{c|}{Merges}  \\ \cline{3-6} & \multicolumn{1}{c|}{}  & \multicolumn{1}{c|}{DC} & \multicolumn{1}{c|}{SC} & \multicolumn{1}{c|}{SM} & \multicolumn{1}{c|}{Total} \\ \hline
JFreeChart & 26 & 18 & 6  & 1  & 25   \\ \hline
Closure  & 176  & 168  & 3  & 0  & 171 \\ \hline
Commons Lang  & 65  & 16  & 34  & 11  & 61   \\ \hline
Commons Math  & 106  & 61 & 32  & 6   & 99   \\ \hline
Mockito  & 38 & 6 & 0 & 0 & 6  \\ \hline
Joda-Time & 27  & 23  & 4  & 0 & 27   \\ \hline
Total & 438  & 292  & 79 & 18 & 389  \\ \hline
\end{tabular}
\end{table}


\subsection{RQ1: How many conflict 3-way merges are found by SafeMerge and Tom respectively?}

\textbf{\textit{SafeMerge.}} Once changes of branches are made to different methods, SafeMerge fails to verify whether conflicts exist.
As shown in Table I, there are 18 three-way merges in which modifications are made to the same method.
We run SafeMerge on these 18 merges to evaluate its effectiveness.
However, SafeMerge fails to deal with 17 out of 18 merges by throwing the same error\footnote{This issue has been reported to the authors while it has not been fixed till the date that we submit this paper. We would update the results once the issue is resolved.} which stops the verification procedure.
As for the remaining merge, SafeMerge fails to return results after running for one hour, which is much greater than the average time cost (i.e., most of them are less than one second and the greatest one is 4.45s) as shown in~\cite{Sousa18}.
Unfortunately, we fail to utilize SafeMerge to detect conflicts on all of those merges from MCon4J.

\textbf{\textit{TOM.}}
Recall that for selecting UUTs we limit the explored depth and the total number of UUTs.
In our experiments, we set the explored depth to 5 and the number of UUTs to 3 respectively.
We conduct two groups of experiments guided by different coverage criteria.
For the first experiment, we only use the proposed diff-line coverage criterion.
For the second experiment, we add more coverage criteria\footnote{The used criteria implemented in EvoSuite includes branch, cbranch, weakmutation, output, exception, method and methodnoexception.} used in EvoSuite by default.
Because the random operators existing in the search process, we run TOM on each 3-way merge three times to have a comprehensive view on the tool's ability. 

\begin{figure}[t]
\centering
\includegraphics[width=3.3in]{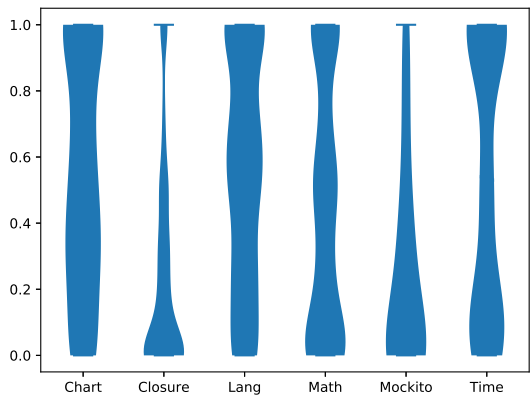}
\caption{The distributions of coverage information.}
\label{coverage_info}
\end{figure}

As shown in Table \ref{results_3way}, we list the total of 45 conflict merges detected by TOM.
There are 42 and 40 conflict 3-way merges detected respectively by two groups of experiments.
We fail to tell the multiple criteria work better than the diff-line criterion by examining these experimental results.

\begin{table*}[t]
\caption{Conflict three-way merges that detected by executing three times with different coverage criteria ``Diff-line'' and ``Multi-criteria''. ``$-$'' means the conflict merge fails to be detected. ``$\odot$'' means the conflict merge is detected when generating test cases for the $P_f$ variant (i.e., the bug-fix version). ``$\ominus$'' means the conflict merge is detected when generating test cases for non-$P_f$ variants. ``$\oplus$'' means the conflict merge is detected when generating test cases for both $P_f$ and non-$P_f$ variants.}
\label{results_3way}
\centering
\begin{tabular}{|c|c|c|c|c|c|c|c||c|c|c|c|c|c|c|c|}
\hline
\multirow{2}{*}{Id} & \multirow{2}{*}{Type} & \multicolumn{3}{c|}{Diff-line} & \multicolumn{3}{c||}{Multi-criteria} & \multirow{2}{*}{Id} & \multirow{2}{*}{Type} & \multicolumn{3}{c|}{Diff-line} & \multicolumn{3}{c|}{Multi-criteria} \\ \cline{3-8} \cline{11-16} 
 &  & \#1  & \#2 & \#3 & \#1 & \#2 & \#3 & & & \#1 & \#2 & \#3 & \#1 & \#2 & \#3 \\ \hline
 Chart\_4 &DC& $\odot$ & $\odot$ & $\odot$ & $\odot$ & $\odot$ & $\odot$ & Math\_16 &SC& $-$ & $-$ & $\odot$ & $-$ & $\odot$ & $-$ \\ \hline
Chart\_5 &DC& $-$ & $\odot$ & $-$ & $-$ & $-$ & $-$ & Math\_27 &DC& $\odot$ & $-$ & $\odot$ & $\odot$ & $\odot$ & $-$ \\ \hline
Chart\_8 &SC& $\ominus$ & $\oplus$ & $\ominus$ & $\ominus$ & $\ominus$ & $\ominus$ & Math\_29 &DC& $\odot$ & $\odot$ & $\odot$ & $\odot$ & $\odot$ & $\odot$ \\ \hline
Chart\_11 &SC& $-$ & $\oplus$ & $\ominus$ & $\ominus$ & $\odot$ & $\oplus$ & Math\_32 &DC& $-$ & $\odot$ & $-$ & $-$ & $-$ & $\odot$ \\ \hline
Chart\_14 &DC& $-$ & $-$ & $-$ & $-$ & $-$ & $\odot$ & Math\_37 &DC& $\oplus$ & $-$ & $-$ & $-$ & $-$ & $-$ \\ \hline
Chart\_16 &SC& $\oplus$ & $\oplus$ & $\oplus$ & $\oplus$ & $\oplus$ & $\oplus$ & Math\_46 &SC& $-$ & $\odot$ & $-$ & $-$ & $\oplus$ & $\oplus$ \\ \hline
Chart\_21 &DC& $\oplus$ & $\oplus$ & $\oplus$ & $\oplus$ & $\oplus$ & $\oplus$ & Math\_47 &SC& $\oplus$ & $\ominus$ & $\ominus$ & $\oplus$ & $\oplus$ & $\oplus$ \\ \hline
Chart\_24 &SC& $\oplus$ & $\odot$ & $\oplus$ & $\oplus$ & $\oplus$ & $\oplus$ & Math\_49 &DC& $\odot$ & $\odot$ & $\odot$ & $\odot$ & $\odot$ & $\odot$ \\ \hline
Closure\_19 &DC& $\odot$ & $\odot$ & $\odot$ & $\odot$ & $\odot$ & $\odot$ & Math\_56 &SC& $\odot$ & $\odot$ & $\odot$ & $\odot$ & $\odot$ & $\odot$ \\ \hline
Closure\_173 &DC& $-$ & $\ominus$ & $\ominus$ & $-$ & $-$ & $-$ & Math\_60 &DC& $\odot$ & $\odot$ & $\odot$ & $-$ & $-$ & $\oplus$ \\ \hline
Lang\_19 &DC& $-$ & $-$ & $-$ & $\odot$ & $-$ & $-$ & Math\_63 &SC& $\oplus$ & $\odot$ & $-$ & $\oplus$ & $\oplus$ & $\oplus$ \\ \hline
Lang\_39 &SC& $\odot$ & $\oplus$ & $\oplus$ & $\oplus$ & $-$ & $\oplus$ & Math\_70 &SC& $-$ & $\odot$ & $\odot$ & $\odot$ & $\odot$ & $\odot$ \\ \hline
Lang\_41 &SC& $\odot$ & $\odot$ & $\odot$ & $\odot$ & $\odot$ & $\odot$ & Math\_71 &DC& $\odot$ & $\odot$ & $\odot$ & $\odot$ & $\odot$ & $-$ \\ \hline
Lang\_45 &DC& $\oplus$ & $\oplus$ & $\oplus$ & $\oplus$ & $\oplus$ & $\oplus$ & Math\_73 &SC& $\ominus$ & $\ominus$ & $\ominus$ & $\ominus$ & $\ominus$ & $\ominus$ \\ \hline
Lang\_47 &SC& $\odot$ & $\odot$ & $\odot$ & $\odot$ & $\odot$ & $\odot$ & Math\_80 &SC& $\odot$ & $\odot$ & $-$ & $-$ & $\odot$ & $\odot$ \\ \hline
Lang\_60 &SC& $\oplus$ & $\oplus$ & $\oplus$ & $\oplus$ & $\oplus$ & $\oplus$ & Math\_81 &SC& $\oplus$ & $\odot$ & $\odot$ & $\odot$ & $-$ & $\odot$ \\ \hline
Lang\_65 &SM& $\oplus$ & $\oplus$ & $\oplus$ & $\oplus$ & $\oplus$ & $\oplus$ & Math\_83 &DC& $\odot$ & $\odot$ & $\odot$ & $\odot$ & $\odot$ & $\odot$ \\ \hline
Math\_1 &DC& $\odot$ & $-$ & $\odot$ & $-$ & $-$ & $\odot$ & Math\_92 &SM& $-$ & $\oplus$ & $\oplus$ & $\oplus$ & $\ominus$ & $\oplus$ \\ \hline
Math\_2 &DC& $-$ & $-$ & $\odot$ & $-$ & $-$ & $-$ & Math\_93 &SC& $\odot$ & $\odot$ & $-$ & $\odot$ & $\odot$ & $\odot$ \\ \hline
Math\_4 &DC& $\odot$ & $\oplus$ & $\oplus$ & $\oplus$ & $\oplus$ & $\oplus$ & Math\_95 &SC& $\oplus$ & $\oplus$ & $\oplus$ & $\oplus$ & $\oplus$ & $\oplus$ \\ \hline
Math\_9 &SC& $-$ & $-$ & $-$ & $\odot$ & $-$ & $-$ & Math\_97 &DC& $\odot$ & $\odot$ & $-$ & $\odot$ & $-$ & $\odot$ \\ \hline
Math\_10 &DC& $\odot$ & $\odot$ & $\odot$ & $-$ & $-$ & $-$ & Time\_9 &DC& $\ominus$ & $-$ & $\ominus$ & $\ominus$ & $\ominus$ & $\ominus$ \\ \hline
Math\_11 &DC& $-$ & $\odot$ & $-$ & $-$ & $\odot$ & $\odot$ & & & & & & & & \\ \hline
\end{tabular}
\end{table*}

During the process of generating test cases for merges, if we can find one execution path cover the changes, we may detect the merge conflict.
In other words, higher coverage does not always mean the conflicts can be found.
However, higher coverage still improves the possibility of detecting conflicts.
As shown in the Fig.~\ref{coverage_info}, we present the achieved maximum coverage during the test case generation for each target $P_2$ variant (i.e., the bug fixed version) guided by the diff-line coverage.
As we can see, TOM achieves lower coverage on the projects Closure and Mockito than the four other projects.
The existing study~\cite{Shamshiri15} also shows that EvoSuite achieves the low coverage when generates test cases for the Closure project (the Mockito project is not included in the Defects4J at that time).
It seems that the relatively high coverage is achieved for the project Time, while there are more flaky tests generated according to our observations on the test generation process.
And these flaky tests affects the coverage information collected during the unit test generation.

\subsection{RQ2: How does TOM perform on those constructed octopus merges?}
We construct octopus merges based on constructed 3-way merges.
To answer this question, we adapt the same settings and use the set of multiple criteria described in RQ1.
Then, we run TOM on each octopus merge three times.

As shown in Table \ref{results_octopus}, we show the details of experimental detection results on octopus merges.
There are a total of 87 conflict octopus merges detected by TOM.
Comparing those detected conflict 3-way merges, we find those octopus merges from 35 out of 45 cases whose 3-way conflict merges have been detected.
A total of 52 conflict octopus merges from newly appeared cases are detected.
We construct octopus merges by adding one mutated branch based on the constructed 3-way merge.
With more mutated code changing program behaviors, it makes sense that TOM can find more conflict octopus merges than conflict 3-way merges.

\begin{table*}
\caption{Conflict octopus merges detected by executing three times with the multiple coverage criteria. The same symbols are explained in the caption of Table II.}
\label{results_octopus}
\centering
\begin{tabular}{|c|c|c|c|c||c|c|c|c|c||c|c|c|c|c|}
\hline
Id      & Type & \#1     & \#2 & \#3 & Id & Type & \#1 & \#2 & \#3 & Id & Type & \#1 & \#2 & \#3 \\ \hline
Chart\_2 &DC & $\odot$ & $\odot$ & $\odot$ & Closure\_97 &DC & $\ominus$ & $\ominus$ & $\ominus$ & Math\_31 &DC & $\odot$ & $-$ & $-$ \\ \hline
Chart\_4 &DC & $\odot$ & $\odot$ & $\odot$ & Closure\_108 &DC & $\ominus$ & $\ominus$ & $\ominus$ & Math\_32 &DC & $\odot$ & $\odot$ & $-$ \\ \hline
Chart\_7 &DC & $\odot$ & $\odot$ & $-$ & Closure\_111 &DC & $-$ & $-$ & $\ominus$ & Math\_37 &DC & $\odot$ & $\oplus$ & $\odot$ \\ \hline
Chart\_11 &SC & $\odot$ & $\odot$ & $\odot$ & Closure\_138 &DC & $\odot$ & $\odot$ & $\odot$ & Math\_39 &DC & $-$ & $-$ & $\ominus$ \\ \hline
Chart\_14 &DC & $-$ & $\odot$ & $-$ & Closure\_140 &DC & $\ominus$ & $\ominus$ & $\ominus$ & Math\_46 &SC & $\odot$ & $\odot$ & $\odot$ \\ \hline
Chart\_20 &DC & $-$ & $\oplus$ & $\ominus$ & Closure\_148 &DC & $\odot$ & $\odot$ & $\odot$ & Math\_47 &SC & $\oplus$ & $\oplus$ & $\odot$ \\ \hline
Chart\_21 &DC & $\odot$ & $-$ & $-$ & Closure\_152 &DC & $\oplus$ & $\oplus$ & $\oplus$ & Math\_49 &DC & $\odot$ & $\odot$ & $\odot$ \\ \hline
Chart\_24 &SC & $\odot$ & $\odot$ & $\odot$ & Closure\_165 &DC & $\odot$ & $\odot$ & $\odot$ & Math\_52 &DC & $\oplus$ & $\oplus$ & $\ominus$ \\ \hline
Chart\_25 &DC & $\ominus$ & $\oplus$ & $\oplus$ & Lang\_15 &DC & $\oplus$ & $\oplus$ & $\ominus$ & Math\_56 &SC & $\odot$ & $\odot$ & $\odot$ \\ \hline
Closure\_1 &DC & $\ominus$ & $\odot$ & $-$ & Lang\_17 &DC & $\ominus$ & $\ominus$ & $\ominus$ & Math\_60 &DC & $\odot$ & $\odot$ & $\odot$ \\ \hline
Closure\_2 &DC & $\odot$ & $-$ & $\odot$ & Lang\_29 &DC & $\oplus$ & $\oplus$ & $\oplus$ & Math\_63 &SC & $\odot$ & $\odot$ & $\odot$ \\ \hline
Closure\_3 &DC & $\oplus$ & $\odot$ & $\odot$ & Lang\_34 &DC & $-$ & $\ominus$ & $-$ & Math\_70 &SC & $\oplus$ & $\odot$ & $\odot$ \\ \hline
Closure\_15 &DC & $\ominus$ & $\ominus$ & $\ominus$ & Lang\_36 &SC & $\odot$ & $\odot$ & $\odot$ & Math\_71 &DC & $\odot$ & $-$ & $\odot$ \\ \hline
Closure\_19 &DC & $\oplus$ & $\oplus$ & $\oplus$ & Lang\_37 &SC & $-$ & $\ominus$ & $\ominus$ & Math\_73 &SC & $-$ & $\ominus$ & $-$ \\ \hline
Closure\_24 &DC & $-$ & $\ominus$ & $\ominus$ & Lang\_38 &SC & $\odot$ & $\oplus$ & $\odot$ & Math\_80 &SC & $-$ & $\odot$ & $\oplus$ \\ \hline
Closure\_26 &DC & $-$ & $\ominus$ & $-$ & Lang\_39 &SC & $\odot$ & $\odot$ & $\odot$ & Math\_81 &SC & $\odot$ & $-$ & $\odot$ \\ \hline
Closure\_27 &DC & $\ominus$ & $\ominus$ & $\ominus$ & Lang\_41 &SC & $\odot$ & $\odot$ & $\odot$ & Math\_83 &DC & $\odot$ & $\odot$ & $\odot$ \\ \hline
Closure\_32 &DC & $\ominus$ & $\ominus$ & $\ominus$ & Lang\_45 &DC & $\odot$ & $\odot$ & $\odot$ & Math\_85 &DC & $\ominus$ & $\ominus$ & $\ominus$ \\ \hline
Closure\_47 &DC & $-$ & $-$ & $\odot$ & Lang\_47 &SC & $\oplus$ & $\oplus$ & $\oplus$ & Math\_87 &DC & $-$ & $\odot$ & $-$ \\ \hline
Closure\_55 &DC & $-$ & $\ominus$ & $\ominus$ & Lang\_60 &SC & $\odot$ & $\odot$ & $\odot$ & Math\_88 &DC & $\ominus$ & $\ominus$ & $\ominus$ \\ \hline
Closure\_67 &DC & $\odot$ & $\odot$ & $\odot$ & Lang\_65 &SM & $\odot$ & $\odot$ & $\odot$ & Math\_92 &SM & $\odot$ & $\odot$ & $\odot$ \\ \hline
Closure\_72 &DC & $\ominus$ & $\ominus$ & $\ominus$ & Math\_1 &DC & $-$ & $\odot$ & $-$ & Math\_93 &SC & $\odot$ & $\odot$ & $\odot$ \\ \hline
Closure\_78 &DC & $-$ & $\ominus$ & $\ominus$ & Math\_4 &DC & $\odot$ & $\odot$ & $\oplus$ & Math\_95 &SC & $\odot$ & $\odot$ & $\odot$ \\ \hline
Closure\_80 &DC & $\odot$ & $\odot$ & $\odot$ & Math\_9 &SC & $\ominus$ & $\oplus$ & $\ominus$ & Math\_96 &SC & $\oplus$ & $\oplus$ & $\oplus$ \\ \hline
Closure\_81 &DC & $\ominus$ & $\ominus$ & $\ominus$ & Math\_11 &DC & $\odot$ & $\odot$ & $-$ & Math\_102 &DC & $\odot$ & $\odot$ & $\odot$ \\ \hline
Closure\_84 &DC & $-$ & $\ominus$ & $\ominus$ & Math\_16 &SC & $\odot$ & $-$ & $-$ & Math\_103 &DC & $\ominus$ & $-$ & $-$ \\ \hline
Closure\_89 &DC & $\ominus$ & $-$ & $\ominus$ & Math\_25 &SC & $-$ & $-$ & $\ominus$ & Math\_104 &SC & $\odot$ & $-$ & $-$ \\ \hline
Closure\_95 &DC & $\ominus$ & $\ominus$ & $\ominus$ & Math\_27 &DC & $\odot$ & $\odot$ & $\odot$ & Time\_9 &DC & $\oplus$ & $\odot$ & $\odot$ \\ \hline
Closure\_96 &DC & $\ominus$ & $-$ & $-$ & Math\_29 &DC & $\odot$ & $\odot$ & $\odot$ & Time\_20 &DC & $-$ & $-$ & $\ominus$ \\ \hline
\end{tabular}
\end{table*}





\subsection{Threats to Validity}
The main threats to the validity of our results belong to the internal and external validity threat categories.

Internal validity threats correspond to the implementation of TOM and the relevant scripts.
Although we have reviewed the implementation carefully, the bugs may still exist and threat to the validity of our results.

External validity threats correspond to the constructed benchmark MCon4J.
We mutate the source code by a set of mutant operators to simulate the real-world changes.
Even though mutants can be used as good resource for research.
Our tool may fail to achieve the similarly good results for detecting conflicts on real-world merge scenarios.
The parameters affect the UUTs selection, and the values used also may fail to work well for other merge scenarios.
As a result, these parameters may be adjusted in other merge scenarios.

\section{Discussion}

In this section, we discuss merge conflict resolution, the application scenario of TOM and the potential techniques on improving TOM.

\textbf{Merge Conflict Resolution.}
Besides the textual tools, other existing syntactical and structural tools~\cite{mens02}\cite{Zhu2018} also aim to keep all changes introduced by different branches in the merge version.
However, the reasonableness of keeping and combining all changes may be questionable according to the contract of semantic conflict freedom.
When different changes are made to the same assignment statement, the values of the variables would be different.
In this case, whatever developers have done to resolve conflicts, the values of this variable in different versions would violate the contract of semantic conflict freedom.
Without knowing other requirements on the merge, it is unable to resolve the conflicts automatically.
For example, to resolve conflicts automatically, Xing and Maruyama~\cite{Xing19} develops one automated program repair tool which needs one test case representing contracts provided by developers.
Existing empirical studies~\cite{Menezes18} on the resolutions of merge conflicts also show that developers are much likely to simply choose one version as the merge result when conflict arises.
Hence, instead of replacing the textual tools with fine-grained tools to generate merges automatically, we think that we should pay more attention on guaranteeing the software quality after merging.

\textbf{Continuous Integration.}
Continuous integration service is widely used in the open-source community to automatically find the build or test failures.
Considering the workflow of collaborative development and the relative expensive costs of generating test cases for detecting conflicts, we think integrating the unit test generation in continuous integration is practical.
After rerunning the whole test-suite, the continuous integration service is able to collect the coverage information.
Then, if no tests fail, it utilizes the existing test case that covers changed code to seed the unit test generation to accelerate the procedure of generating test cases to trigger the semantic conflicts. 
Given the resource limit or the coverage requirements, the continuous integrations service is helpful to give developers more confidence on the quality of the merged software.

\textbf{Potential Improvement.}
Besides the ability of test generation tools, identifying those methods that may have latent conflicts also plays an important role in the test generation for detecting conflicts.
In our paper, we select the UUTs for generating tests based on the explicit call and use dependencies between changed fields, constructors and methods.
However, we may miss some dependencies between changed entities, which leads to the failure of detecting conflicts.
Those dependencies include the co-change relationships that can be mined from the software evolutionary history, the documented API usages and other common contracts (e.g., developers should call the \textit{close} method to free resource after invoking \textit{open} method).
In real-world merge scenarios, we can seed the existing test cases that represents some dependency between entities to generate tests.
Still, the missing of dependencies between methods has an impact on the detection results.

During the procedure of selecting UUTs, we extract all of the changed entities.
Note that some changes like refactoring may not change the behavior at all.
To prove the behavioral equivalence between program versions, much works \cite{Lahiri10,Partush14,Churchill19} have been done.
However, considering the effectiveness, soundness and availability of these existing tools, we currently do not make use of them to rule out those change impacted methods that may not behave differently.
In future, with the help of advanced tools that can precisely determine the different behavior introduced by changes, we can save some computing resource by filtering out those changed methods whose behaviors are not changed.

\section{Related Work}
In this section, we mainly describe the related works on software merging and regression testing.

\subsection{Software Merging}
Over decades, software merging has been extensively studied because of its important role in software maintenance and evolution.
Ahmed et al.~\cite{Ahmed17} study the relationship between code smells and merge conflicts, and results show merges contain more code smells when conflicts arise.
Nearly twenty years ago, Mens~\cite{mens02} provided a comprehensive summary of excellent works in this field.
Different merging techniques such as textual, syntactic, semantic, structural and operation-based merging have been proposed very early.

However, the situation that developers rely on textual merge tools to deal with their daily work has not been changed.
Mckee et al.~\cite{McKee17} conduct interviews of 10 software practitioners to understand their perspectives on merge conflicts and resolutions. 
According to the unmet needs of software practitioners, they suggest researchers and tool builders focus on program comprehension, history exploration, etc. 
Nishimura and Maruyama~\cite{Nishimura16} present one tool that exploits the fine-grained edit history to assist developers to examine the merge conflicts.

For the last decade, some new ideas and trends have emerged.
Considering the variety of program languages, semi-structural merging~\cite{Apel11} aims to achieve the balance between generality and performance.
Proactive or early detection~\cite{Brun11}\cite{Guimaraes12}\cite{Nguyen15} of conflicts is used to decrease the possibility of merging branches with conflicts.
Providing a set of candidate conflict resolutions to developers is also helpful.
Niu et al.~\cite{Nan12} develop a tool scoreRec that recommends the conflict resolutions ordered by estimating the cost and benefit of resolving conflicts.
Zhu and He~\cite{Zhu2018} propose an interactive approach that ranks the conflict resolutions generated via the version space algebra.
Xing and Maruyama~\cite{Xing19} introduce the automatic program repair techniques to resolve the merge conflicts by leveraging the existing test cases.

Sousa et al.~\cite{Sousa18} propose the contract of semantic conflict freedom inspired from much earlier work~\cite{Horwitz89}\cite{Yang90}, and then propose the verification on three-way merges to increase developers' confidence on the merge result with respect to the contract. 
In this paper, we propose the test oracle inspired from the semantic conflict freedom and make it applicable for all merge scenarios.
Utilizing the state-of-the-art unit test generation tool, we can generate tests to reveal conflicts if any.

\subsection{Regression Testing}

Regression test selection and regression test generation are the major techniques trying to prevent regression faults effectively with the low cost.
The cost of rerunning the whole test suite is growing with the size and complexity of the evolving software.
Various regression test selection techniques have been proposed.
Zhu et al.~\cite{Zhu2019} propose the framework for checking the regression test selection tool.

Different from selection, regression test generation aims to generate new test cases that can expose the behavioral differences between two versions, when existing test suites fail to expose difference.
Taneja and Xie~\cite{Taneja08} synthesize one driver that calls two versions of the target method and adds conditions comparing the return values.
Then, they utilize existing tools to generate test cases that cover the different branches in the driver method.  
As Jin et al.~\cite{Jin10} explain, the generated test cases may not reveal the regression faults while cover the changed parts.
Hence, they develop BERT to generate test cases that cover different parts first, and then analyze the behavioral differences to reveal the regression faults.

Person et al.~\cite{Person08} propose differential symbolic execution that leverages symbolic execution techniques to characterize the changes, without providing the inputs to execute the changed program.
Person et al.~\cite{Person11} propose directed incremental symbolic execution to find those path conditions affected by code changes. 
Taneja et al.~\cite{Taneja11} develop eXpress as one search strategy for dynamic symbolic execution to prune out paths that cannot lead to any code regions and those paths through which a state infection cannot propagate to any observable output.

Xu et al.~\cite{Xu10} propose directed test suite augmentation that identifies code affected by changes and existing test cases relevant to testing that code. 
Then the identified test cases are used to seed the concolic or genetic test case generation approach to create new test cases that execute the affected code.

Software merging acts as the important activity during the software evolution, whereas there is no tool aiming to generate test cases revealing conflicts after merging.
In our paper, we implement TOM to generate test cases for 2-way, 3-way and octopus merges.

\section{Conclusion}
In this paper, we propose the general test oracles for real-world program merges including 2-way, 3-way and octopus merges.
Based on our test oracles, we propose an approach of regression unit test generation for detecting semantic conflicts.
On this basis, we implement a tool called TOM to automatically generate test cases to reveal merge conflicts.
In addition, we design the benchmark MCon4J to support further studies on regression test generation for software merges.
In our experiments, a total of 45 conflict 3-way merges and 87 conflict octopus merges are identified by our tool TOM, while the state-of-the-art verification based tool SafeMerge fails to work on MCon4J.
The experimental results show that our regression unit test generation tool is useful and effective in guaranteeing the quality of real-world program merges.

\bibliographystyle{ACM-Reference-Format}
\bibliography{../../references}


\end{document}